\newcommand\bcmdtab{\noindent\bgroup\tabcolsep=0pt%
  \begin{tabular}{@{}p{10pc}@{}p{20pc}@{}}}
\newcommand\ecmdtab{\end{tabular}\egroup}
\def\aopl{$\mathscr{AOPL}$\xspace}
\def\al{$\mathscr{AL}_d$\xspace}
\def\p{$\mathscr{P}$\xspace}
\def\t{$\mathscr{T}$\xspace}
\def\lp{$lp(\mathscr{P}, \sigma)$\xspace}
\def\Ps{$\textbf{P}(\sigma)$\xspace}
\def\apia{$\mathscr{APIA}$\xspace}
\def\reilp{$rei\_lp(\mathscr{P})$\xspace}
\def\reilps{$rei\_lp(\mathscr{P}, \sigma)$\xspace}
\def\ev{$\langle \sigma, a \rangle$\xspace}
\newtheorem{example}{Example}
\newtheorem{definition}{Definition}
\newtheorem{proposition}{Proposition}
\newtheorem{theorem}{Theorem}
  \title[Framework for Policy Refinement]
        {An ASP Framework for the Refinement of Authorization and Obligation Policies}
  \author[D. Inclezan]
         {DANIELA INCLEZAN\\
         Miami University, Oxford, OH, USA\\
         \email{inclezd@miamioh.edu}}
\begin{document}

\label{firstpage}

\maketitle

\begin{abstract}
This paper introduces a framework for assisting policy authors in refining and improving their policies. In particular, we focus on authorization and obligation policies that can be encoded in  Gelfond and Lobo's \aopl language for policy specification. We propose a framework that detects the statements that make a policy inconsistent, underspecified, or ambiguous with respect to an action being executed in a given state. We also give attention to issues that arise at the intersection of authorization and obligation policies, for instance when the policy requires an unauthorized action to be executed. The framework is encoded in Answer Set Programming.

\end{abstract}

\begin{keywords}
policy, authorizations and obligations, dynamic domains, ASP
\end{keywords}


\section{Introduction}

This paper introduces a framework for assisting policy authors in refining and improving the policies they elaborate. Here, by a {\em policy} we mean {\em a collection of statements that describe the permissions and obligations related to an agent's actions.}

In particular, we focus on authorization and obligation policies that can be encoded in the policy specification language \aopl by Gelfond and Lobo \citeyear{gl08}. \aopl allows an author to specify policies for an autonomous agent acting in a changing environment. A description of the dynamic domain in terms of sorts of the domain, relevant fluents, and actions is assumed to be available to the policy writer. Policy rules of \aopl may be of two kinds: \emph{authorization} rules specifying what actions are permitted/ not permitted and in which situations, and \emph{obligation} rules indicating what actions an agent must perform or not perform under certain conditions. Rules can either be strict or defeasible, and preferences between defeasible rules can be set by the policy author. The semantics of \aopl is defined via a translation into Answer Set Programming (ASP) \cite{gl91}.
Gelfond and Lobo define policy properties such as consistency and categoricity. However, there is a gap in analyzing what happens at the intersection between authorization and obligation policies, for instance when a policy requires an unauthorized action to be executed, which is called a {\em modality conflict} by Craven et al. \citeyear{clmrlb09}.

We propose a framework that detects the rules that make a policy inconsistent, underspecified, or ambiguous with respect to
an action and a given state. The goal is to notify the policy author about the natural language statements in the policy that may be causing an issue and explain why that is the case for the particular action and state.

Given rapid advancements in AI in the past years, the importance of setting and enforcing policies on intelligent agents has become paramount. At the same time, policy specifications can become large and intricate. Thus, assisting policy authors and knowledge engineers with policy refinement by automatically detecting 
issues 
in a provably correct way and highlighting conflicting policy statements is of great importance.

The contributions of our work are as follows:
\begin{itemize}
    \item We define a {\bf new translation} of \aopl policies into ASP by reifying policy rules.
    \item We formally {\bf define issues} that may arise in \aopl policies and describe how to {\bf detect the causing policy statements}, using the reified ASP translation.
    \item We define means for {\bf explaining the root causes for issues} like inconsistency, underspecification, ambiguity, and modality conflicts.
\end{itemize}

In what follows, we provide a short description of language \aopl in Section \ref{aopl} and give a motivating example in Section \ref{motivation}. We describe our new translation of \aopl policies into ASP in Section \ref{reify} and introduce our framework in Section \ref{analysis}. We discuss related work in Section \ref{related_work} and end with conclusions and future work.


\section{Background: Language \aopl}
\label{aopl}
Let us now briefly present the \aopl language.
We direct the unfamiliar reader to outside resources on ASP \cite{gl91,mt99} and
action language \al \cite{gi13,gk14}, which are also relevant to this work.

Gelfond and Lobo \citeyear{gl08}\footnote{Available at https://www.depts.ttu.edu/cs/research/documents/44.pdf} introduced the Authorization and Obligation Policy Language \aopl for specifying policies for an intelligent agent acting in a dynamic environment. A policy is a collection of authorization and obligation statements, which we simply call authorizations and obligations, respectively. An {\em authorization} indicates whether an agent's action is permitted or not, and under which conditions. 
An {\em obligation} describes whether an agent is obligated or not obligated to perform a specific action under certain conditions. 
An \aopl policy works in conjunction with a dynamic system description of the agent's environment written in an action language such as \al. 
The signature of the dynamic system description includes predicates denoting {\em sorts} for the elements in the domain;
{\em fluents} (i.e., properties of the domain that may be changed by actions); and {\em actions}.
As in \al, we consider dynamic systems that can be represented by a directed graph, called a {\em transition diagram}, containing a finite number of nodes representing physically possible states of the dynamic domain. A state is a complete and consistent set of fluent literals. Arcs in the transition diagram are labeled by action atoms (shortly {\em actions}) that take the system from one state to another. Actions can be elementary or compound, where a compound action is a set of elementary actions executed simultaneously. 

{\bf The signature of an \aopl policy} includes the signature of the associated dynamic system and additional predicates $permitted$ for authorizations,
$obl$ for obligations, and \textit{prefer} for establishing preferences between authorizations or obligations. A \textit{prefer} atom is created from the predicate \textit{prefer}; similarly for $permitted$ and $obl$ atoms.

\begin{definition}[Policy]
\label{def1}
An \aopl \textbf{policy} \p is a finite collection of statements of the form:
\begin{subequations} \label{eq2}
    \begin{align}
            & \ \ permitted\left(e\right) & \textbf{ if } \ cond \label{eq1_1}\\[-0.3em]
            & \neg permitted\left(e\right) & \textbf{ if } \ cond \label{eq1_2}\\[-0.3em]
		& \ \ obl\left(h\right) & \textbf{ if } \ cond \label{eq1_3}\\[-0.3em]
            & \neg obl\left(h\right) & \textbf{ if } \ cond\label{eq1_4}\\[-0.3em]
        d: \textbf{normally } & \ \ permitted(e) & \textbf{ if } \ cond \label{eq2_1}\\[-0.3em]
        d: \textbf{normally } & \neg permitted(e) & \textbf{ if } \ cond \label{eq2_2}\\[-0.3em]
        d: \textbf{normally } & \ \ obl(h) & \textbf{ if } \ cond \label{eq2_3}\\[-0.3em]
        d: \textbf{normally } & \neg obl(h) & \textbf{ if } 
        \ cond \label{eq2_4}\\[-0.3em]
	    	& \ \ \mbox{\textit{prefer}}(d_i, d_j) & \label{eq2_5}
    \end{align}
\end{subequations}
where $e$ is an elementary action; $h$ is a happening (i.e., an elementary action or its negation\footnote{If $obl(\neg e)$ is true, 
then the agent must not execute $e$.}); 
$cond$ is a (possibly empty) collection of atoms of the signature, except for atoms containing the predicate \textit{prefer}; $d$ appearing in (\ref{eq2_1})-(\ref{eq2_4}) denotes a defeasible rule label; and $d_i$, $d_j$ in (\ref{eq2_5}) refer to two distinct rule labels from \p.
Rules (\ref{eq1_1})-(\ref{eq1_4}) encode {\em strict} policy statements, while rules (\ref{eq2_1})-(\ref{eq2_4}) encode {\em defeasible} statements (i.e., statements that may have exceptions). Rule (\ref{eq2_5}) captures {\em priorities} between defeasible statements.

In deontic terms, rules (\ref{eq1_1}) and (\ref{eq2_1}) denote \textit{permissions}; rules (\ref{eq1_2}) and (\ref{eq2_2}) denote \textit{prohibitions};
rules
 (\ref{eq1_3}) and (\ref{eq2_3}) denote \textit{obligations}; and
 rules  (\ref{eq1_4}) and (\ref{eq2_4}) denote \textit{dispensations}.
\end{definition}

The\textbf{ semantics} of an \aopl policy determine a mapping \Ps from states of a transition diagram \t into a collection of $permitted$ and $obl$ literals. To formally describe the semantics of \aopl, a translation of a policy and transition diagram into ASP is defined.

\begin{definition}[ASP Translation of a Policy and State]
\label{def2}
{
The translation $lp$ is defined as:
\begin{itemize}
    \item If $x$ is a fluent literal, action literal, $permitted$, or $obl$ literal, then $lp(x) =_{def} x$ .
    \item If $L$ is a set of literals, then $lp(L) =_{def} \{lp(l) : l \in L\}$
    \item If $r = ``l \ \textbf{if} \ cond"$ is a strict rule like the ones in (\ref{eq1_1})-(\ref{eq1_4}), then $lp(r) =_{def} lp(l) \leftarrow lp(cond)$
    \item If $r$ is a defeasible rule like (\ref{eq2_1}) or (\ref{eq2_2}), or a preference rule ``\textit{prefer}$(d_i, d_j)$" like the one in (\ref{eq2_5}), then $lp(r)$ is obtained using standard ASP techniques for encoding defaults, as shown in equations (\ref{eq3_1}), (\ref{eq3_2}), and (\ref{eq3_3}) respectively:
\begin{subequations} \label{eq3}
    \begin{align}
    permitted(e) & \leftarrow \ lp(cond), \mbox{not}\ ab(d), \mbox{not}\ \neg permitted(e) \label{eq3_1}\\
    \neg permitted(e) & \leftarrow \ lp(cond), \mbox{not}\ ab(d), \mbox{not}\ permitted(e) \label{eq3_2}\\
    ab(d_j) & \leftarrow \ lp(cond_i) \label{eq3_3}
    \end{align}
\end{subequations}
where $cond_i$ is the condition of $d_i$.
Similarly for defeasible obligations (\ref{eq2_3}) and (\ref{eq2_4}).
    \item If \p is a policy, then $lp(\mathscr{P}) =_{def} \{lp(st) : st \in \mathscr{P}\}$.
    \item If $\mathscr{P}$ is a policy and $\sigma$ is a state of the (transition diagram associated with the) dynamic system description $\mathscr{T}$, $$lp(\mathscr{P}, \sigma) =_{def} lp(\mathscr{P}) \cup lp(\sigma)$$
\end{itemize}
}
\end{definition}

Properties of an \aopl policy \p are defined in terms of the answer sets of the logic program $lp(\mathscr{P}, \sigma)$ expanded with appropriate rules.

The following definitions by Gelfond and Lobo are relevant to our work (original definition numbers in parenthesis). In what follows $a$ denotes a (possibly) compound action (i.e., a set of simultaneously executed elementary actions), while $e$ refers to an elementary action. An event \ev is a pair consisting of a state $\sigma$ and a (possibly) compound action $a$ executed in $\sigma$.\footnote{In policy analysis, we want to encompass all possible events, i.e., pairs consisting of a physically possible state $\sigma$ and physically executable action $a$ in $\sigma$.}

\begin{definition}[Consistency -- Def. 3]
\label{def:consistency}
A policy \p for \t is called {\em consistent} if for every state $\sigma$ of \t, the logic program \lp is consistent, i.e., it has an answer set.
\end{definition}

\vspace{-5pt}
\begin{definition}[Policy Compliance for Authorizations -- Defs. 4 and 5]
\label{def:auth_compliance}

$\bullet\ $ An event \ev is {\em strongly compliant} with authorization policy \p if for every $e \in a$ we have that $permitted(e) \in$ \Ps (i.e., the logic program \lp entails $permitted(e)$).

\noindent
$\bullet\ $ An event \ev is {\em weakly compliant} with authorization policy \p if for every $e \in a$ we have that $\neg permitted(e) \notin$ \Ps (i.e., the logic program \lp does not entail $\neg permitted(e)$).

\noindent
$\bullet\ $ An event \ev is {\em non-compliant} with authorization policy \p if for every $e \in a$ we have that $\neg permitted(e) \in$ \Ps (i.e., the logic program \lp entails $\neg permitted(e)$).
\end{definition}


\vspace{-5pt}
\begin{definition}[Policy Compliance for Obligations -- Def. 9]
\label{def:obl_compliance}
An event \ev is {\em compliant} with obligation policy \p if 

\noindent
$\bullet\ $ For every $obl(e) \in$ \Ps we have that $e \in a$, and

\noindent
$\bullet\ $ For every $obl(\neg e) \in$ \Ps we have that $e \notin a$.
\end{definition}

\vspace{-5pt}
\begin{definition}[Categoricity -- Def. 6]
\label{def:categoricity}
A policy \p for \t is called {\em categorical} if for every state $\sigma$ of \t the logic program \lp is categorical, i.e., has exactly one answer set.
\end{definition}

Note that \aopl does not discuss interactions between authorizations and obligations referring to the same action, for instance situations when both $obl(e)$ and $\neg permitted(e)$ are part of the answer set of \lp for some state $\sigma$.


\section{Motivating Example}
\label{motivation}

To illustrate the policy refinement process that we want to facilitate, let's consider an example provided by Gelfond and Lobo, expanded with an additional rule 4):

\begin{example}[Authorization Policy Example]
\label{ex0}
1) A military officer is not allowed to command a mission they authorized.

\noindent
 2) A colonel is allowed to command a mission they authorized.
 
 \noindent
3) A military observer can never authorize a mission.

 \noindent
4) A military officer must command a mission if ordered by their superior to do so.
\end{example}

Before discussing the encoding of this policy, let us assume that the description of this domain includes actions $assume\_comm(C, M)$ and $authorize\_comm(C, M)$; fluents $autorized(C, M)$ and $ordered\_by\_superior(C, M)$; and sorts $colonel(C)$ and $observer(C)$, where $C$ is a commander and $M$ is a mission.

In the English description of the policy in Example \ref{ex0}, note that statements 1) and 2) are phrased as strict rules and thus an automated translation process into \aopl would produce the policy:
\begin{subequations} 
\label{ex1}
    \begin{align}		
    \neg permitted(assume\_comm(C, M)) & \textbf{ if } 
        authorized(C, M) \label{ex1_1}\\[-0.2em]
    permitted(assume\_comm(C, M)) & \textbf{ if } colonel(C) 
        \label{ex1_2}\\[-0.2em]
    \neg permitted(authorize\_comm(C, M)) & \textbf{ if } 
        observer(C) \label{ex1_3}\\[-0.2em]
    obl(assume\_comm(C, M)) & \textbf{ if } 
        ordered\_by\_superior(C, M) \label{ex1_4}
    \end{align}
\end{subequations}

Such a policy is inconsistent in a state in which $authorized(c, m)$ and $colonel(c)$ both hold, due to rules (\ref{ex1_1}) and (\ref{ex1_2}). Gelfond and Lobo indicate that ``[s]ince the nature of the first two authorization policy statements of our example are contradictory we naturally assume them to be defeasible" and replace the encoding in rules (\ref{ex1_1}) and (\ref{ex1_2}) with
$$
    \begin{array}{ll}		
    d_1(C, M)\ : & {\bf normally}  \ 
         \neg permitted(assume\_comm(C, M)) \ \textbf{ if } authorized(C, M) \\
    d_2(C, M)\ : & {\bf normally} \ \ \
         permitted(assume\_comm(C, M)) \ \textbf{ if } colonel(C) \\
    & prefer(d_2(C, M), d_1(C, M)) 
    \end{array}
$$    
while leaving rule (\ref{ex1_3}) as it is. Rule (\ref{ex1_4}) is unaffected, as it corresponds to the new policy statement 4) that we added to the original example by Gelfond and Lobo, to illustrate obligations.

This approach has several drawbacks: (a) it puts the burden on the knowledge engineer, who may have a more limited knowledge of the domain than the policy author and thus may make false assumptions; (b) it does not scale for large and intricate policies; and (c) it would be difficult to automate. Instead, we propose a framework that detects inconsistencies like the one above, alerts the policy author of the conflicting policy statements and the conditions that cause them, and allows the policy author to refine the policy (with options for refinement possibly suggested, in the future). In particular, for Example \ref{ex0}, we expect our framework to indicate that statements 1) and 2) are in conflict in a state in which both $colonel(c)$ and $authorized(c, m)$ are true. Similarly, our framework should flag the contradiction between the obligation in rule 4) and rule 1) in a state in which $authorized(c, m)$ and $ordered\_by\_superior(c, m)$ both hold.

To achieve this goal, we modify \aopl by introducing labels for all rules (including strict and preference rules)\footnote{As in the original \aopl language, preferences can be defined only between pairs of defeasible rules.} and connecting rules of \aopl with natural language statements of the original policy via a new predicate $text$ as in the following example for the strict policy rule in (\ref{ex1_3}) where $s_1$ is the label for the strict authorization rule:
$$
\begin{array}{ll}
s_1 :& \neg permitted(authorize\_command(C, M)) \textbf{ if }        observer(C) \\
& text(s_1, ``\mbox{A military observer can never authorize a mission.}")
\end{array}
$$
Additionally, we define a different translation of \aopl into ASP, which we will denote by $rei\_lp$ as it {\em reifies} policy rules. We define the $rei\_lp$ translation in the next section.


\section{Reification of Policy Rules}
\label{reify}

The new translation of \aopl into ASP that we propose follows previous methods for the reification of rules in other domains, such as reasoning about prioritized defaults \cite{gs97} or belief states \cite{bgps20}. 
Similar to the definition of the $lp$ translation function, the signature of \reilp for a policy \p applying in a dynamic domain described by \t contains the sorts, fluents, and actions of \t. To simplify the presentation, we limit ourselves to boolean fluents and use the general syntax  
\begin{equation}\label{gen_rule}
r : [{\bf normally}] \ hd \ {\bf if} \  cond
\end{equation}
to refer to both strict and defeasible, authorization and obligation rules from \p. We use the term  {\em head} of rule $r$ to refer to the $hd$ part in (\ref{gen_rule}), where $hd \in HD$,
$$HD = \bigcup\limits_{e \in E}\{permitted(e), \neg permitted(e), obl(e), obl(\neg e), \neg obl(e), \neg obl(\neg e)\}$$ 
and $E$ is the set of all elementary actions in \t. 
%
The signature of \reilp also includes the elements of $HD$ and the following predicates:
\begin{itemize}
\item $rule(r)$ -- where $r$ is a rule label (referred shortly as ``rule" below)
\item $type(r, ty)$ -- where $ty \in \{strict,$ \textit{defeasible, prefer}$\}$ is the type of rule $r$
\item $text(r, t)$ -- to denote that rule $r$ corresponds to 
policy statement $t$
\item $head(r, hd)$ -- to denote the head $hd$ of rule $r$ 
\item $body(r, b(r))$ -- where $b(r)$ denotes the condition $cond$ in rule $r$ and $b$ is a new function added to the signature of $rei\_lp(\mathscr{P})$
\item $mbr(b(r), l)$ -- for every $l$ in the condition $cond$ of rule $r$
($mbr$ stands for ``member")
\item $ab(r)$ -- for every defeasible rule $r$
\item $holds(x)$ -- where $x$ may be a rule $r$, the head $hd$ of a rule, function $b(r)$ representing the $cond$ of a rule $r$, literal $l$ of \t, or $ab(r)$ from above
\item $opp(r, \overline{hd})$ -- where $r$ is a defeasible rule and  $\overline{hd} \in HD$ ($opp$ stands for ``opposite")
\item \textit{prefer}$(d_1, d_2)$ -- where $d_1$ and $d_2$ are defeasible rule labels
\end{itemize}

The predicate $holds$ helps determine which policy rules are applicable, based on what fluents are true/false in a state and the interactions between policy rules. The predicate $opp(r, \overline{hd})$ indicates that $\overline{hd}$ is the logical complement of $r$'s head $hd$.

The {\bf translation \reilp} consists of facts encoding the rules in \p using the predicates $rule$, $type$, $head$, $mbr$, and \textit{prefer}, as well as the set of policy-independent rules 
below, which define predicates $holds$ and $opp$, where $L$ is a fluent literal, $E$ an elementary action, and $H$ a happening (i.e., an elementary action or its negation). 
$$
\begin{array}{lll}
    body(R, b(R)) & \leftarrow & rule(R)\\
    holds(R) & \leftarrow & type(R, strict), holds(b(R))\\
    holds(R) & \leftarrow & type(R, defeasible), holds(b(R)), \\
             & & opp(R, O), \mbox{not } holds(O), \mbox{not } holds(ab(R))\\          
    holds(B) & \leftarrow & body(R, B), N = \#count\{L : mbr(B, L)\},\\
             & & N = \#count\{L : mbr(B, L), holds(L)\}\\
    holds(ab(R2)) & \leftarrow & \mbox{\textit{prefer}}(R1, R2), holds(b(R1)) \\
    holds(Hd) & \leftarrow &  rule(R), holds(R), head(R, Hd)\\ 
    opp(R, permitted(E)) & \leftarrow & head(R, \neg permitted(E))\\
    opp(R, \neg permitted(E)) & \leftarrow & head(R, permitted(E))\\
    opp(R, obl(H)) & \leftarrow & head(R, \neg obl(H))\\
    opp(R, \neg obl(H)) & \leftarrow & head(R, obl(H))
\end{array}
$$

\vspace{-5pt}
\begin{definition}[Reified ASP Translation of a Policy and State]
\label{def:rei_state}
Given a state $\sigma$ of \t,  \reilps $=_{def} $ \reilp $\cup \ \{ holds(l) : l \in \sigma \ \}$. 
\end{definition}

\noindent
This definition will be used in conducting various policy analysis tasks in Section \ref{analysis}.

\begin{proposition}[Relationship between the Original and Reified ASP Translations]
\label{prop1}
Given a state $\sigma$ of \t, there is a one-to-one correspondence $map : {\cal A} \rightarrow {\cal B}$ between the collection of answer sets ${\cal A}$ of \lp and the collection of answer sets ${\cal B}$ of \reilps such that if $map (A) = B$ then $\forall hd \in HD \cap A, \exists holds(hd) \in B$.
\end{proposition}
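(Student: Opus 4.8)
The plan is to exhibit the bijection explicitly and verify it through the reduct (Gelfond--Lifschitz) characterization of answer sets, reducing the claim to a rule-by-rule correspondence between the two translations. First I would define the forward map: given an answer set $A$ of \lp, let $map(A) = B$ be the set consisting of (i) all reification facts ($rule$, $type$, $head$, $mbr$, \textit{prefer}) of \reilp together with the $body$ and $opp$ atoms these deterministically generate; (ii) the state atoms $\{holds(l) : l \in \sigma\}$; (iii) $holds(b(r))$ for each rule $r$ whose condition $cond_r$ is satisfied in $A$; (iv) $holds(ab(r))$ exactly when $ab(r) \in A$; (v) $holds(r)$ for each rule $r$ that \emph{fires} in $A$, meaning its body is satisfied and, when $r$ is defeasible, additionally $ab(r) \notin A$ and the complement $\overline{hd}$ of its head is not in $A$; and (vi) $holds(hd)$ for every $hd \in HD \cap A$. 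Symmetrically, the candidate inverse sends $B$ to $A = \sigma \cup \{hd \in HD : holds(hd) \in B\} \cup \{ab(r) : holds(ab(r)) \in B\}$. Since the fluent literals in any answer set of \lp coincide with $\sigma$, and since the policy rules derive only heads and $ab$ atoms, these two maps are manifestly mutually inverse once each is shown to land in the appropriate answer-set collection; the required property $\forall hd \in HD \cap A,\ holds(hd) \in B$ is then immediate from clause (vi).

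The core of the argument is a line-by-line match between the rules of \lp and the policy-independent rules of \reilp under this correspondence. The clause deriving $holds(b(R))$ through the two $\#count$ aggregates succeeds precisely when every member of the body holds, so $holds(b(r)) \in B$ iff $cond_r \subseteq A$. The preference clause $holds(ab(R2)) \leftarrow \textit{prefer}(R1,R2), holds(b(R1))$ mirrors $ab(d_j) \leftarrow lp(cond_i)$ of (\ref{eq3_3}), since \textit{prefer} is a fact and $b(d_i)$ encodes $cond_i$. For a strict rule, the pair $holds(R) \leftarrow type(R,strict), holds(b(R))$ and $holds(Hd) \leftarrow rule(R), holds(R), head(R, Hd)$ together reproduce $lp(l) \leftarrow lp(cond)$. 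For a defeasible rule, the body $type(R,defeasible), holds(b(R)), opp(R,O), \mbox{not } holds(O), \mbox{not } holds(ab(R))$ reproduces the body $lp(cond), \mbox{not } ab(d), \mbox{not } \neg permitted(e)$ of (\ref{eq3_1}) and (\ref{eq3_2}), because $opp(R,O)$ pins $O$ to the logical complement of the head and $\mbox{not } holds(O)$ therefore corresponds to $\mbox{not } \overline{hd}$.

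With these matches in place I would compute the reduct of \reilps with respect to $B$ and show its least model equals $B$, using that $A$ is the least model of the reduct of \lp with respect to $A$. The negation-as-failure literals deleted from the \reilp reduct, namely $holds(ab(r))$ and $holds(O)$, are in bijection with those deleted from the \lp reduct, namely $ab(d)$ and the opposite head, so the two resulting definite programs derive corresponding atoms in lockstep. The auxiliary atoms of clauses (i)--(iii) are produced by a deterministic, stratified sub-program and are hence forced in every model of the reduct, which simultaneously secures minimality of $B$ and rules out spurious models.

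The main obstacle will be the defeasible layer, where $holds(Hd)$ depends on $holds(R)$, which in turn depends through negation on $holds(O)$ for the opposite head $O$ -- itself the $holds(\cdot)$ of another rule's conclusion. This mutual, non-stratified dependency is exactly the one already present between the paired defeasible rules (\ref{eq3_1}) and (\ref{eq3_2}) of \lp, so no new stable models can arise; but making this precise requires showing that the extra indirection through $holds(R)$ and the aggregate-defined $holds(b(R))$ neither introduces additional minimal models of the reduct nor breaks the minimality of $B$. Treating the $\#count$ aggregates correctly inside the reduct, and confirming that conditions which mention $permitted$ or $obl$ atoms (so that a body may depend on a head) do not disturb the correspondence, are the technical points I would handle most carefully.
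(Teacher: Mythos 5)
The paper states Proposition~\ref{prop1} without giving any proof, so there is nothing to compare against directly; what follows is an assessment of your proposal on its own terms. Your overall strategy --- exhibit the bijection explicitly, match the rules of \lp and the policy-independent rules of \reilp clause by clause, and then verify stability via the Gelfond--Lifschitz reduct, treating the deterministic reification facts and the aggregate-defined $holds(b(R))$ atoms as a stratified bottom layer --- is the natural and essentially correct way to prove this. Your identification of the delicate points (the indirection through $holds(R)$, the aggregate semantics, and bodies that mention $HD$ atoms) is accurate, and the rule-by-rule correspondences you describe (strict rules via the $holds(R)$/$holds(Hd)$ pair, defeasible rules via $opp$ and the two negative literals, preferences via $holds(ab(R2)) \leftarrow \mbox{\textit{prefer}}(R1,R2), holds(b(R1))$ versus $ab(d_j) \leftarrow lp(cond_i)$) are the right ones.

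There is, however, one genuine gap: the inconsistent case. In \lp the atoms $permitted(e)$ and $\neg permitted(e)$ are complementary literals under classical negation, so if two strict rules with complementary heads both fire, \lp either has no answer set or only the trivial contradictory one. In \reilps the corresponding objects are $holds(permitted(e))$ and $holds(\neg permitted(e))$, which are two \emph{distinct, non-complementary} atoms; the reified program happily produces a consistent answer set containing both. This decoupling is deliberate --- it is exactly what lets the paper's program \textbf{I} detect and explain inconsistency --- but it means the claimed one-to-one correspondence fails for inconsistent policies: ${\cal A}$ is empty (or trivial) while ${\cal B}$ is not. Your forward map and reduct argument implicitly assume $A$ is a consistent answer set, so your proof goes through only under a consistency hypothesis on \p with respect to $\sigma$; you should either add that hypothesis explicitly or treat the contradictory case separately, since as written neither the proposition nor your argument covers it. A second, smaller caveat: your claim that the auxiliary layer is stratified is only true under the Section~\ref{analysis} restriction that $cond$ contains no $HD$ atoms, whereas the proposition is stated before that restriction is imposed; if bodies may mention $permitted$ or $obl$ atoms, $holds(b(R))$ participates in the non-stratified cycle and the aggregate must be unfolded into the monotone conjunction $\bigwedge_{l \in cond} holds(l)$ before the reduct argument applies.
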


\section{Policy Analysis}
\label{analysis}

In what follows, we assume that the $cond$ part of a policy rule cannot include atoms from the set $HD$ (i.e., atoms obtained from predicates $permitted$ and $obl$). We plan to consider more general policies in future work. Lifting this restriction complicates the task of finding explanations beyond the goal of the current work.

\subsection{Analyzing Authorization Policies}
\label{sec:auth}
Our goal in analyzing policies is to assist a policy author in refining their policies by indicating to them the rules that cause concern. Thus, when analyzing an authorization policy \p with respect to an elementary action $e$ in a state $\sigma$, we focus on the 
tasks:
\begin{itemize}
    \item {\bf Explain the causes of inconsistencies} -- determining the rules that cause a policy to derive both $holds(permitted(e))$ and $holds(\neg permitted(e))$ when using the $rei\_lp$ translation
    \item {\bf Detect and explain underspecification} -- determining whether rules about $e$ exist or not, and, if they exist, explain why they do not fire. (Craven et al. \citeyear{clmrlb09} call this situation {\em coverage gaps}.)
    \item {\bf Detect and explain ambiguity} -- determining whether there are conflicting defeasible rules that produce $holds(permitted(e))$ in some answer sets and $holds(\neg permitted(e))$ in others, and indicating which rules these are.
\end{itemize}

\subsubsection{Inconsistency}

To detect and explain inconsistencies with respect to an elementary action $e$ and state $\sigma$ we introduce the following predicates: 
\begin{itemize}
    \item $inconsistency(e, r_1, r_2)$ -- indicates that the pair of rules $r_1$ and $r_2$ both fire and cause the inconsistency; $r_1$ produces $permitted(e)$ and $r_2$ produces $\neg permitted(e)$
    \item $inconsistency\_expl(e, t_1, t_2)$ -- does the same but indicates the natural language texts of the corresponding policy statements
    \item $inconsistency\_expl\_pos(e, l)$ -- indicates that $l$ is a fluent/static that holds in $\sigma$ and contributes to the inconsistency in a rule that produces $permitted(e)$
    \item $inconsistency\_expl\_neg(e, l)$ -- similar to the previous predicate, but for rules that produce $\neg permitted(e)$
\end{itemize}
We define a logic program \textbf{I} consisting of the rules below:
$$
\begin{array}{lll}
inconsistency(E, R1, R2) & \leftarrow & holds(permitted(E)), holds(\neg permitted(E)),\\
   & & holds(R1), head(R1, permitted(E)), \\
   & & holds(R2), head(R2, \neg permitted(E))
\end{array}
$$
$$
\begin{array}{lll}
inconsistency\_expl(E, T1, T2) & \leftarrow &  inconsistency(E, R1, R2),\\ 
    & & text(R1, T1),  text(R2, T2)\\
inconsistency\_expl\_pos(E, L) & \leftarrow & inconsistency(E, R1, \_), head(R1, permitted(E)), \\
    & &  mbr(b(R1), L), holds(L)\\
inconsistency\_expl\_neg(E, L) & \leftarrow & inconsistency(E, \_, R2), head(R2, \neg permitted(E)),\\
    & &  mbr(b(R2), L), holds(L)
 \end{array}
$$

We restate Definition \ref{def:consistency} in terms of the \reilps translation.
\begin{definition}[Inconsistency Redefined]
An authorization policy \p is {\em inconsistent} with respect to an elementary action $e$ and a state $\sigma$ if the answer set of \reilps $\cup$ \textbf{I} contains $inconsistency(e, r_1, r_2)$ for a pair of rules $r_1$ and $r_2$.
\end{definition}

\vspace{-5pt}
\begin{definition}[Explaining the Causes of Inconsistency]
\label{def:expl_inconsistency}
$\bullet \ $ An explanation for the inconsistency of $e$ in $\sigma$ is the set of pairs of strings

$\{(t_1, t_2) : inconsistency\_expl(e, t_1, t_2) \in $ \reilps $\cup$ \textbf{I}$\}$.

\noindent
$\bullet \ $ 
A fluent literal $l$ contributes positively (or negatively) to the inconsistency of $e$ in $\sigma$ if the answer set of \reilps $\cup$ \textbf{I} contains $inconsistency\_expl\_pos(e, l)$ (or $inconsistency\_expl\_neg(e, l)$, respectively).
\end{definition}

The collection of atoms identified in Definition \ref{def:expl_inconsistency} can be collected from the answer set of \reilps $\cup$ \textit{I} and post-processed to provide more human-friendly output.

\subsubsection{Underspecification}

We define the notion of {\em underspecification} of an elementary action $e$ in a state $\sigma$ as the lack of {\em explicit} information as to whether $e$ is permitted or not permitted in that state, similar to the concept of {\em coverage gap} defined by Craven et al. \citeyear{clmrlb09}. Note that underspecification is different from non-categoricity, where a policy may be ambiguous because $e$ is deemed permitted in some answer sets and not permitted in others. 

\begin{definition}[Categoricity and Underspecification of an Action in a State]
\label{def:under}
A consistent authorization policy \p is {\em categorical} with respect to an elementary action $e$ and state $\sigma$ if one of the following cases is true:
\begin{enumerate}
\item \reilps entails $holds(permitted(e))$, or

\item \reilps entails $holds(\neg permitted(e))$, or

\item For every answer set $S$ of \reilps, 
 $$\{holds(permitted(e)), holds(\neg permitted(e))\}\cap \ S = \emptyset$$ 
 
 In this last case, we say that $e$ is {\em underspecified} in state $\sigma$.
 \end{enumerate}
\end{definition}

Underspecification is important because it may reflect an oversight from the policy author. If it's unintended, it can have negative consequences in planning domains for example, when an agent may want to choose the most compliant plan and thus actions that are mistakenly underspecified may never be selected.
To test whether an action $e$ is underspecified in a state, we define the set of rules $Check_{und}(e)$ consisting of the set of constraints:  
$$
\begin{array}{ll}
\{ & \leftarrow holds(permitted(e)), \\ 
   & \leftarrow \ holds(\neg permitted(e))\  \ \}
\end{array}
$$
\begin{definition}[Detecting Underspecification]
{
Action $e$ is {\em underspecified} in $\sigma$ if the logic program $lp(\mathscr{P}, \sigma) \cup Check_{und}(e)$ is consistent.
}
\end{definition}

Whenever an elementary action is underspecified in a state, there may be two explanations: (Case 1) the authorization policy contains no rules about $e$, or (Case 2) rules about $e$ exist in the policy but none of them apply in state $\sigma$.
Once we establish that an elementary action is underspecified, we want to explain to the policy author why that is the case.
For the first case, we just want to inform the policy author about the situation. In the second case, we want to report, for each authorization rule about $e$, which fluents make the rule non-applicable. Note that a defeasible rule $r$ with head $hd$ (see (\ref{gen_rule})) cannot be made unapplicable by the complement $\overline{hd}$, as the complement is underivable as well in an underspecified policy. Similarly, a preference rule cannot disable a defeasible rule either, as this can only be the case when the complement $\overline{hd}$ can be inferred.

Let \textbf{U} be the logic program
$$
\begin{array}{lll}
rules\_exist(E) & \leftarrow & head(R, \ permitted(E); \neg permitted(E))\\
underspec\_1(E) & \leftarrow & \mbox{not } rules\_exist(E)\\
underspec\_1\_expl(\mbox{``Case 1"}, E) & \leftarrow & 
  underspec\_1(E)\\

underspec\_2(E) & \leftarrow & rules\_exist(E)\\
underspec\_2(E, R, L) & \leftarrow & underspec\_2(E), rule(R),\\ 
  & & head(R,\  permitted(E); \neg permitted(E)), \\
  & & mbr(b(R), L), \\
  & & \mbox{not } holds(L) \\
underspec\_2\_expl(\mbox{``Case 2"}, E, R, L, T) & \leftarrow & underspec\_2(E, R, L),\\
& & text(R, T)
\end{array}
$$
\begin{definition}[Explaining the Causes of Underspecification]
\label{def:expl_under}
An explanation for the underspecification of $e$ in $\sigma$ is the set of atoms formed by predicates $underspec\_1\_expl$ and $underspec\_2\_expl$ found in the answer set of \reilps $\cup$ \textbf{U}.
\end{definition}

For a more human-friendly explanation, an atom $underspec\_1\_expl(\mbox{``Case 1"}, e)$ in the answer set can be  replaced with an explanation of the form ``There are no authorization rules about $e$" in the post-processing phase. A collection of atoms 
of the form $$\{underspec\_2\_expl(\mbox{``Case 2"}, e, r, l_1, t), \dots, underspec\_2\_expl(\mbox{``Case 2"}, e, r, l_n, t)\}$$ can be replaced with the explanation ``Rule $r$ about action $e$ (stating that ``$t$") is rendered inapplicable by the fact that fluent(s) $l_1, \dots, l_n$ do not hold in this state."

\subsubsection{Ambiguity}

We define ambiguity as the case when the policy allows a choice between $permitted(e)$ and $\neg permitted(e)$. This notion of ambiguity overlaps with that of a non-categorical policy. However, given our assumption that $permitted$ atoms are not included in the condition $cond$ of policy rules, ambiguity is a much more specific case. 
We claim that, if \p is a consistent, non-categorical policy with respect to $e$ in $\sigma$ (see Definition \ref{def:under}), then $holds(permitted(e))$ will be in some answer sets of \reilps and $holds(\neg permitted(e))$ will be in others, but it cannot be the case that an answer set does not contain either.\footnote{Note that, if we lift our restriction and allow $cond$ to contain $permitted$ (or $obl$) atoms, then for a weakly compliant action $e$, there can be a combination of answer sets containing $holds(permitted(e))$ and answer sets not containing neither $holds(permitted(e))$ nor $holds(\neg permitted(e))$, if $cond$ contains a $permitted(e_1)$ atom such that $e_1$ is an action that is ambiguous in $\sigma$.}

The justification is that the body $cond$ of policy rules is fully determined by the {\em unique} values of fluents in $\sigma$. Hence, strict rules either fire or do not. If a strict ruled fired, it would automatically override the defeasible rules with the complementary head, and thus lead either to inconsistency 
(depending on which other strict rules fire) 
or categoricity.
The only source of non-categoricity can be the presence of defeasible rules with complementary heads and satisfied conditions, and which are not overriden by preference rules. 

\begin{definition}[Ambiguity of an Action in a State]
Let \p be a policy that is consistent and non-categorical with respect to elementary action $e$ and state $\sigma$. Let \reilps have $n$ answer sets, out of which $n_p$ answer sets contain $holds(permitted(e))$ and $n_{np}$ contain $holds(\neg permitted(e))$.

\noindent
\p is {\em ambiguous} with respect to $e$ and $\sigma$ if $n \neq n_p$, $n \neq n_{np}$ and $n = n_p + n_{np}$.
\end{definition}

Next, let's describe how we detect ambiguity.
\begin{definition}[Detecting Ambiguity]
Action $e$ is {\em ambiguous} in $\sigma$ if $holds(permitted(e))$ and $holds(\neg permitted(e))$ are not entailed by \reilps and $e$ is not underspecified in $\sigma$.
\end{definition}

Once ambiguity is established, an explanation for ambiguity is needed. To produce it, we define the logic program \textbf{A} consisting of the rules:
$$
\begin{array}{lll}
ambiguous(E, R1, R2) & \leftarrow & defeasible\_rule(R1), head(R1, permitted(E)),\\
   & & defeasible\_rule(R2), head(R2, \neg permitted(E)),\\
   & & holds(b(R1)), holds(b(R2)),\\
   & & \mbox{not } holds(ab(R1)), \mbox{not } holds(ab(R2))\\

ambiguity\_expl(E, T1, T2)& \leftarrow & ambiguous(E, R1, R2), text(R1, T2), text(R2, T2)\\
\end{array}
$$

\begin{definition}[Explaining the Causes of Ambiguity]
\label{def:expl_ambiguity}
An explanation for the ambiguity of $e$ in $\sigma$ is the set of pairs of strings: 
$$\{(t_1, t_2) : ambiguity\_expl(e, t_1, t_2) \in rei\_lp(\mathscr{P}, \sigma) \cup \mbox{ \textbf{A}} \}$$
\end{definition}

\subsubsection{Observation about Strongly vs Weakly Compliant Policies}

Gelfond and Lobo distinguish between actions that are strongly compliant in a state versus weakly compliant (see Definition \ref{def:auth_compliance}). In planning, as shown by Meyer and Inclezan \citeyear{mi21}, it seems reasonable to prefer strongly compliant actions over weakly compliant ones. However, in the theorem below we show that the class of weakly compliant actions includes strongly compliant ones. What we really need for planning purposes is distinguishing between strongly compliant and underspecified actions in a state, so that we can create a preference order between actions. 

\begin{theorem}[Strongly vs Weakly Compliant Actions]
\label{th1}
All elementary actions that are strongly compliant in a state $\sigma$ are also weakly compliant.
\end{theorem}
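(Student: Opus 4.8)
The plan is to reduce the statement to the case of a single elementary action, unpack the two compliance notions of Definition \ref{def:auth_compliance} as skeptical entailment over the answer sets of \lp, and then appeal to the defining property of answer sets: each is a consistent set of literals and therefore cannot contain a $permitted$ atom together with its strong negation.

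First I would observe that both the strong- and the weak-compliance conditions in Definition \ref{def:auth_compliance} quantify ``for every $e \in a$'', so it suffices to prove the implication for a single elementary action $e$, namely that if \lp entails $permitted(e)$ then \lp does not entail $\neg permitted(e)$. By definition, the hypothesis says $permitted(e) \in$ \Ps, i.e.\ $permitted(e)$ belongs to every answer set of \lp; the desired conclusion says $\neg permitted(e) \notin$ \Ps, i.e.\ $\neg permitted(e)$ is absent from at least one answer set of \lp.

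Next I would invoke the standing assumption that \p is consistent (the setting in which \Ps is defined, following Gelfond and Lobo's Def.\ 4), so that \lp has at least one answer set $S$. The key step is then immediate: answer sets under the stable-model semantics with classical negation are consistent sets of literals, hence no answer set can contain both $permitted(e)$ and $\neg permitted(e)$. Since strong compliance places $permitted(e)$ in every answer set, $\neg permitted(e)$ occurs in none of them; in particular $\neg permitted(e) \notin S$, so $\neg permitted(e)$ is not true in all answer sets and is therefore not entailed by \lp. By definition this is exactly weak compliance of $e$, which completes the argument.

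I do not expect a deep obstacle here; the argument is short and rests entirely on the consistency of individual answer sets. The one subtlety to handle carefully is the role of the consistency assumption on \p: without it, an inconsistent \lp would vacuously entail every literal — both $permitted(e)$ and $\neg permitted(e)$ — making $e$ strongly compliant yet not weakly compliant and thus invalidating the claim. The proof must therefore make explicit that we reason within consistent policies, for which \Ps is well-defined, and must use the fact that no answer set ever contains a literal together with its strong negation.
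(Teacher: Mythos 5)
Your proof is correct and follows essentially the same route as the paper's: unpack strong and weak compliance as entailment over the answer sets of \lp, use the consistency of individual answer sets to conclude that $permitted(e)$ being in every answer set forces $\neg permitted(e)$ to be absent from all of them, and set aside the inconsistent-policy case (which the paper likewise dispatches, calling the theorem vacuously true there). Your additional remarks --- the explicit reduction to a single elementary action and the observation that an answer set must exist for non-entailment to follow --- only make the same argument slightly more careful.
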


\begin{proof}
Note that, in this proof, we consider the original $lp$ translation of \aopl, which is equivalent with, but more convenient to use here than, the reified translation $rei\_lp$ as stated in Proposition~\ref{prop1}. According to Definition \ref{def:auth_compliance} borrowed from Gelfond and Lobo's work, elementary action $e$ is strongly compliant with authorization policy \p if $lp(\mathscr{P}, \sigma)$ entails $permitted(e)$, and it is weakly compliant if $lp(\mathscr{P}, \sigma)$ does not entail $\neg permitted(e)$. For consistent policies, the latter condition is obviously true if $e$ is strongly compliant in $\sigma$, as having $permitted(e)$ in every answer set of $lp(\mathscr{P}, \sigma)$ implies that $\neg permitted(e)$ must be absent from each such answer set. If the policy is inconsistent, the theorem is vacuously true.
\end{proof}

Given our assumption that $permitted$ (and $obl$) atoms cannot appear in the $cond$ part of policy rules, an elementary action $e$ can only be either strongly compliant in $\sigma$ or underspecified.
\footnote{If this restriction is lifted and the condition $cond$ of a policy rule for $e$ contains a $permitted(e_1)$ atom, such that $e_1$ is {\em ambiguous}, then $e$ may be a weakly compliant action because it will be permitted in some answer sets and unknown in others.}

We formulate the following proposition, which is useful in creating an ordering of actions based on compliance (relevant in planning). 
\begin{proposition}[Properties of Authorization Policies]
$\bullet\ $ If condition $cond$ of authorization rules is not allowed to contain $permitted$ (or $obl$) atoms and \p is {\em categorical} with respect to $e$ and $\sigma$, then $e$ is either strongly compliant, non-compliant, or underspecified in $\sigma$.

\noindent
$\bullet\ $ If condition $cond$ of authorization rules is not allowed to contain $permitted$ (or $obl$) atoms and \p is {\em non-categorical} with respect to $e$ and $\sigma$, then $e$ is neither strongly compliant nor non-compliant; it may be either underspecified or ambiguous. 
\end{proposition}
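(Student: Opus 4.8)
The plan is to read off both bullets from the three-way case split that is already built into Definition~\ref{def:under}, using Proposition~\ref{prop1} to translate between $holds(permitted(e))$ being entailed by \reilps and $permitted(e)$ being entailed by \lp, and then supplying the structural argument that, under the $cond$ restriction, pins down the classification of $e$. The single fact that drives everything is a uniformity observation: since $cond$ contains no $permitted$ or $obl$ atoms, the body of each policy rule is built only from fluent/static/sort literals whose truth is fixed by $\sigma$. Consequently $holds(b(r))$ — and hence $holds(ab(r))$, which is derived only through a preference over such a body — has a determinate value that is identical in every answer set of \reilps. This is exactly the content of the ``justification'' paragraph preceding the ambiguity definition, and it confines all non-determinism to the firing condition $\mbox{not}\ holds(O)$ of defeasible rules, i.e. to pairs of defeasible rules with complementary heads and satisfied, non-abnormal bodies.

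For the first bullet I would simply unfold categoricity of a consistent policy with respect to $e$ and $\sigma$ (Definition~\ref{def:under}): one of the three listed cases must hold. In case~1, \reilps entails $holds(permitted(e))$, so by Proposition~\ref{prop1} \lp entails $permitted(e)$, i.e. $permitted(e)\in$ \Ps, which is strong compliance by Definition~\ref{def:auth_compliance}. Case~2 yields $\neg permitted(e)\in$ \Ps, i.e. non-compliance, by the identical argument. Case~3 is underspecification by definition. As these are the only three cases, $e$ must be strongly compliant, non-compliant, or underspecified.

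For the second bullet I would argue from the failure of all three cases. Non-categoricity rules out case~1, so by Proposition~\ref{prop1} and Definition~\ref{def:auth_compliance} $permitted(e)\notin$ \Ps and $e$ is not strongly compliant; the failure of case~2 gives symmetrically that $e$ is not non-compliant. To place $e$ among the underspecified/ambiguous actions I would then use the uniformity observation. No strict rule about $e$ can have a satisfied body: a firing strict rule would derive one of the two literals in every answer set and (overriding the complementary defeasibles) force case~1, case~2, or an inconsistency, each contradicting non-categoricity. Hence the only possible sources of the literals are defeasible rules. If no defeasible rule about $e$ had a satisfied, non-abnormal body, then no answer set would contain either literal and case~3 would hold, again a contradiction; so such a defeasible rule exists, and the failure of cases~1 and~2 forces non-abnormal complementary defeasible rules for both $permitted(e)$ and $\neg permitted(e)$. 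A short stable-model argument on the translation $holds(R)\leftarrow type(R,defeasible), holds(b(R)), opp(R,O), \mbox{not}\ holds(O), \mbox{not}\ holds(ab(R))$ then shows every answer set contains exactly one of the two literals, with both occurring across the answer sets, which is ambiguity (and in particular the stated ``underspecified or ambiguous'').

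The main obstacle I anticipate is precisely this last step: rigorously excluding an answer set that contains neither literal while \p is non-categorical. Here I would combine the uniformity of $holds(b(r))$ and $holds(ab(r))$ with the reduct semantics of the defeasible-rule translation to show that a candidate answer set containing neither literal would satisfy the positive body of some complementary defeasible rule in the reduct and therefore be forced to include that rule's head — so the ``neither'' situation can only occur when \emph{all} relevant defeasible bodies are abnormal, which by uniformity makes it occur in every answer set and collapses into case~3. Everything else is straightforward bookkeeping against the definitions together with Proposition~\ref{prop1}.
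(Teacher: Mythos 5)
The paper gives no proof of this proposition at all: the only supporting material is the informal ``justification'' paragraph that precedes the ambiguity definition (bodies of rules are fully determined by the unique values of fluents in $\sigma$, so strict rules either fire or not, and the only source of non-categoricity is a pair of unoverridden complementary defeasible rules). Your argument is a correct and considerably more rigorous elaboration of exactly that sketch: the ``uniformity observation'' is the paper's justification paragraph made precise, the first bullet is the same three-way unfolding of Definition~\ref{def:under} the paper implicitly relies on, and your reduct argument ruling out an answer set containing neither literal is the step the paper asserts without proof (``it cannot be the case that an answer set does not contain either''). Two small caveats. First, you invoke Proposition~\ref{prop1} in the direction ``\reilps entails $holds(hd)$ implies \lp entails $hd$,'' whereas the proposition as literally stated only guarantees $hd \in A \Rightarrow holds(hd) \in B$; the paper itself uses the two translations interchangeably (e.g., in the proof of Theorem~\ref{th1}), so this is a gap in the paper's statement rather than in your reasoning, but it is worth flagging. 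Second, your second bullet actually proves the stronger conclusion that $e$ is \emph{ambiguous}: under the per-action reading of non-categoricity in Definition~\ref{def:under}, case~3 (underspecification) is excluded by hypothesis, so the ``underspecified'' disjunct in the proposition is vacuous; this is a looseness in the proposition's phrasing, not an error in your proof, since your conclusion entails the stated disjunction.
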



\subsection{Obligation Policy Analysis}
\label{obl}

The techniques from Section \ref{sec:auth} for determining rules that create inconsistencies, underspecification, and ambiguity with respect to an  elementary action and a state can be easily adapted to obligation policies as well. Obligation policies apply to {\em happenings}, which are actions or their negations. For instance, given an elementary action $e$, the following literals are part of the signature of the policy: $obl(e)$, $obl(\neg e)$, $\neg obl(e)$, and $\neg obl(\neg e)$. Inconsistencies between $obl(e)$ and $\neg obl(e)$ on one hand, or between $obl(\neg e)$ and $\neg obl(\neg e)$ are easy to detect. However, there are additional incongruencies that may decrease the quality of a policy, and we may want to alert the policy writer about them as well.

For instance, if a policy \p entails both $obl(e)$ and $obl(\neg e)$ in a state $\sigma$, then any event $\langle \sigma, a\rangle$ will be non-compliant, no matter whether $e \in a$ or $e \notin a$. In actuality, it means that the policy does not allow for the agent to be compliant with respect to obligations in state $\sigma$. Thus the notions of inconsistency and ambiguity should be adapted or expanded to include this situation. We propose the following definition:

\begin{definition}[Conflicting Policy]
Given a consistent obligation policy \p, a state $\sigma$ and an elementary action $e$, we call \p a {\em conflicting obligation policy} with respect to $\sigma$ and $e$ if the logic program $rei\_lp(\mathscr{P}, \sigma)$ entails both $holds(obl(e))$ and $holds(obl(\neg e))$. 
\end{definition}

Explanations for conflicting obligation policies can be found using techniques similar to the ones in Section \ref{sec:auth}.

\subsection{The Intersection between Authorization and Obligation Policies}
\label{intersect}

When combining an authorization policy with an obligation policy, there are a few cases that, while not necessarily inconsistent, certainly seem to require non-compliant behavior from the agent. This is especially the case when an event $\langle \sigma, e\rangle$ is strongly-compliant with the obligation policy but non-compliant with the authorization policy (i.e., in terms of the original translation $lp$ of \aopl into ASP, $lp(\mathscr{P}, \sigma)$ entails both $obl(e)$ and $\neg permitted(e)$ according to Definitions~\ref{def:obl_compliance} and \ref{def:auth_compliance}).
Other situations that may require the policy authors' attention, though to a lesser degree, are when an action is permitted but the agent is obligated not to execute it (i.e., $lp(\mathscr{P}, \sigma)$ entails both $permitted(e)$ and $obl(\neg e)$) or when the agent is obligated to execute an action that is underspecified in that state. We indicate the level of urgency of each of these situations by adding a number from 1 to 3, with 1 being the most needing of re-consideration and 3 being the least urgent. 

Once it has been established that the policies are strongly compliant, non-compliant, or underspecified with respect to the state and elementary action, the following ASP rules determine which policy rules need to be re-visited.
$$
\begin{array}{lll}
require\_cons(E, R1, R2, 1) & \leftarrow & holds(R1), head(R1, obl(E)), \\
  & & holds(R2), head(R2, \neg permitted(E))\\
require\_cons(E, R1, R2, 2) & \leftarrow & holds(R1), head(R1, obl(\neg E)), \\
  & & holds(R2), head(R2, permitted(E))\\
require\_cons(E, R1, R2, 3) & \leftarrow & holds(R1), head(R1, obl(E)), \\
  & & \mbox{not } holds(permitted(E)),\\
  & & \mbox{not } holds(\neg permitted(E))\\

require\_cons\_expl(E, T1, T2, N)
    & \leftarrow & require\_cons(E, R1, R2, N), \\
    & & text(R1, T1), text(R2, T2).
\end{array}
$$

\section{Related Work}
\label{related_work}
Meyer and Inclezan \citeyear{mi21} developed an architecture for policy-aware intentional agents (\apia) by leveraging Blount et al.'s theory of intentions \citeyear{BlountGB15}. An agent's behavior was ensured to be compliant with authorization and obligation policies specified in \aopl and translated into ASP. Meyer and Inclezan's work first highlighted the issues that may arise at the intersection between \aopl authorization and  obligation and  policies. In the \apia architecture, conflicts of this nature were resolved by modifying the policy's ASP encoding to state that such conflicts render a policy inconsistent. In the current work our intention is to alert policy authors about such situations and provide them with the opportunity to decide which policy statements to modify in order to restore consistency. Additionally, in the current work we delve deeper into the tasks associated with policy analysis and look at underspecification and ambiguity as well. We also focus on providing explanations as to why such issues arise.

Craven et al.'s work \citeyear{clmrlb09} is the closest to ours in its intent. The authors define language $\mathscr{L}$ for policy specification and include both authorization and obligation policies. They define a solid set of tasks that an automated analysis of a policy should accomplish, such as discovering {\em modality conflicts} and {\em coverage gaps}, which we target in our work as well. 
Their research assumes that the underlying dynamic domain is specified in Event Calculus \cite{ks89}. Explanations are found via an abductive constraint logic programming proof procedure. Given the absence of a comparison between languages $\mathscr{L}$ and \aopl, it is important to study the problem of policy analysis with respect to language \aopl as well. \aopl has clear advantages, including its ability to express defeasible policies and preferences between policies. Moreover, \aopl can be seamlessly integrated with ASP-based dynamic system descriptions, as different properties of \aopl policies can be checked by finding the answer sets of an ASP program. This would allow coupling policies with system descriptions specified in higher-level action languages that translate into ASP, such as the modular action language $\mathscr{ALM}$ \cite{InclezanG16}, and associated libraries about action and change \cite{di16,di19}.

Other research on policy modeling or analysis using ASP exists, but the goals tend to be different from ours. Corapi et al. \citeyear{crdps11} use inductive logic programming and software engineering-inspired processes to assist policy authors with policy refinement. In their work, refinement suggestions are provided, but this process is driven by use cases that need to be manually created. As a result, the quality of the resulting policy depends on the quality and coverage of the use cases that are provided as an input. In turn, our approach is meant to be more comprehensive and transparent, as it is guided by the policy rules themselves. Another work that uses ASP for policy modeling is that of De Vos et al. \citeyear{dkps19}. Their work encompasses the same types of policies as \aopl, but their focus is on compliance checking and providing explanations for the compliance or non-compliance of events. In contrast, we focus on policy analysis, not compliance checking; our explanations highlight potential problems with a policy and indicate statements that need to be refined. Pellegrini et al. \citeyear{phspfmtpks19} present a framework called DALICC for comparing and resolving compatibility issues with licenses. The goal of their framework is more narrow than ours in the sense that it only focuses on licences and not normative statements in general. For a survey on other policy analysis methods and tools, not necessarily ASP-related, we direct the reader to the paper by Jabal et al. \citeyear{jdbmcvrw19}.


In general, in the policy specification and analysis community, there is a intense focus on access control policies, which may involve the Role-Based Access Control (RBAC) model outlined by Ferraiolo et al. \citeyear{fsgkc01}; the Attribute-Based Access Control Model (ABAC) explored for instance by Davari and Zulkernine \citeyear{dz21} and Xu et al. \citeyear{xwps16}; or the Category-based Access Control Model explored by Alves and  Fern\'{a}ndez \citeyear{AlvesF15}. A secondary focus falls on policies for the management of computer systems. In contrast, \aopl is more general and could be used to represent social norms, for example.

Finally, our work touches upon explainability and finding the causes of issues encountered in \aopl policies. To find even deeper causes that reside in the inner-workings of the dynamic system, we can leverage existing work on explainability in reasoning about action and change, such as the research by LeBlanc et al. \citeyear{LeBlancBV19}; planning domains, as in the work by Vasileiou et al. \citeyear{VYSKCM22}; or logic programming in general, including research by Fandinno and Schultz \citeyear{FandinnoS19} or Cabalar et al. \citeyear{cfm20}. 


\section{Conclusions}

In this paper we introduced a framework for analyzing policies described in the language \aopl with respect to inconsistencies, underspecification, ambiguity, and modality conflict. We reified policy rules in order to detect which policy statements cause the particular issue and (if relevant), which fluents of the domain contribute to such problems. In doing so, we defined new properties of \aopl policies and took a special look at what happens at the intersection of authorization and obligation policies.

As part of future work, we plan to create a system that implements this framework in a way that is user-friendly for policy writers and knowledge engineers. We also intend to extend the framework by lifting some of the simplifying restrictions that we imposed here, for instance by studying the case when there is incomplete information about a state or allowing $permitted$ and $obl$ atoms in the conditions of policy rules.

\bibliographystyle{acmtrans}
\bibliography{iclp2023}

\label{lastpage}
\end{document}